\let\oldvec\vec
\let\vec\oldvec
\renewcommand{\iff}{\Leftrightarrow}
\begin{document}
\title{Payment Networks as Creation Games}
%
%
\author{Georgia Avarikioti \and
Rolf Scheuner \and Roger Wattenhofer}
\authorrunning{G. Avarikioti et al.}
%
\institute{ETH Zurich, Switzerland
\\
\email{\{zetavar,schrolf,wattenhofer\}@ethz.ch}}
\maketitle              
%
\begin{abstract}
Payment networks were introduced to address the limitation on the transaction throughput of popular blockchains. 
To open a payment channel one has to publish a transaction on-chain and pay the appropriate transaction fee. 
A transaction can be routed in the network, as long as there is a path of channels with the necessary capital.
The intermediate nodes on this path can ask for a fee to forward the transaction.
Hence, opening channels, although costly, can benefit a party, both by reducing the cost of the party for sending a transaction and by collecting the fees from forwarding transactions of other parties. 

This trade-off spawns a network creation game between the channel parties. In this work, we introduce the first game theoretic model for analyzing the network creation game on blockchain payment channels.
Further, we examine various network structures (path, star, complete bipartite graph and clique) and determine for each one of them the constraints (fee value) under which they constitute a Nash equilibrium, given a fixed fee policy.
Last, we show that the star is a Nash equilibrium when each channel party can freely decide the channel fee. On the other hand, we prove the complete bipartite graph can never be a Nash equilibrium, given a free fee policy.

\keywords{blockchain\and payment channels\and layer 2\and creation game\and network design\and Nash equilibrium\and payment hubs}
\end{abstract}
%
%
%
\section{Introduction}

Distributed ledgers that employ the Nakamoto \cite{nakamoto2008bitcoin} or similar consensus mechanisms suffer from major scalability problems \cite{croman2016scaling}. In essence, the security of the consensus is based on the ability of each node to verify and store a replica of the entire blockchain history. 
Prominent solutions to this problem are payment channels  \cite{spilman2013channels,decker2015fast,poon2015lightning}. Payment channels are constructions that allow participants of the blockchain to execute the transactions off-chain while maintaining the security guarantees of the blockchain. The parties that enter a payment channel open a joint account with a specific capital and update the distribution of this capital  every time they exchange a transaction. This way the parties can execute an unlimited number of transactions as long as they all agree on the current distribution of capital. In case of a dispute, the blockchain acts as a judge, ensuring that the latest agreed capital distribution is enforced. 

Multiple payment channels create a payment network, where a transaction can be executed even though there is no direct channel between payer and payee. Currently, this is achieved with the Bitcoin Lightning Network \cite{poon2015lightning} using Hash Timelock Contracts (HTLCs) \cite{decker2015fast,bitcoinwiki:htlcs}. 
When a sender wants to route a transaction through the network, a path of channels is discovered that has enough capital on each edge to route the transactions to the receiver \cite{prihodko2016flare,moreno2017silentwhispers,roos2017settling}. The intermediate nodes that move their capital act as service providers for the sender, and ask for a fee for their service. Therefore, operating a payment channel can offer revenue to the owner of the channel apart from reducing the cost of executing transactions on-chain. On the other hand, creating a payment channel is costly, since the opening and closing of the channel must occur on-chain, which requires to pay the regular blockchain transaction fee to the miner.

In this work, we study this trade-off. We investigate possible strategies for a participant. Assuming a constant blockchain fee for each transaction, when does it make sense to open a channel? If a path from sender to receiver already exists, does it make sense to create a cheaper path (or even a direct channel) to claim the fees for forwarding transactions? Our goal is to understand under which constraints specific network structures are Nash equilibiria. In other words, when can the participants of the network increase their profit (or decrease their costs) by changing the network structure? Further, we ask which network structures are stable under a free fee policy where each node on the network that operates a channel can set its own fee on the channel. To the best of our knowledge, our work is the first to analyze payment channels in a formal game theoretic model. 

\subsubsection{Our Contributions.}
First, we introduce a formal game-theoretic model which we later use to study the potential strategies for network creation for the nodes of the payment network. 
The model encapsulates a network creation game on payment channels: each node of the network can create multiple channels to other nodes to collect fees from the transactions that are routed through their channel. However, each channel creation costs the blockchain fee, and also each node competes with the other nodes in the network, since the sender of each transactions will choose the cheapest path to route the transaction to the receiver. 

We assume there is a fixed fee a sender has to pay to each intermediate node to route the transaction to the receiver. Further, we assume nodes have unlimited temporary capital, and that each pair of nodes are sender and receiver to an equal amount of transactions.
Under these assumptions, we explore various network structures, namely the path, the star, the complete bipartite graph and the clique. We find that each network structure constitutes a Nash equilibrium for a specific fee value. 
In particular, we show that the path is a weak Nash equilibrium only if the network fee is zero. Then, we show an upper bound for value of the fee on the star graph. The bound depends on the number of transactions, number of nodes and the value of the blockchain fee.
This means that creating a hub is a Nash equilibrium as long as the fee is very low  (compared to the blockchain fee). 
We observe that if the fee is above the upper bound, a two-stars structure emerges as a Nash equilibrium. We generalize this observation by examining complete bipartite graphs. We provide upper and lower bounds for such graphs which additionally depend on the number of centers (smaller side of the complete bipartite graph). This specific network structure defines an entire class of Nash equilibria. Finally, we consider the complete graph (clique), which is naturally a Nash equilibria when the network fee is very high.

From the plethora of network structures that constitute Nash equilibria when the fee policy is fixed, only the star is stable  under a free fee policy. Specifically, we show that the star is a pure Nash equilibrium, and the nodes will set the fee almost equal to the upper bound of the fixed fee policy. 
Further, we prove that the complete bipartite graph can never be a Nash equilibrium when nodes chose the fee of their channels freely as part of their strategy. 


\section{Model}

In this section, we introduce the game theoretic model. To this end, we first define the necessary notation and assumptions.

\paragraph{Capital \& fees.}
We assume all participants (nodes of the network) have unlimited temporary capital and thus the capital locked in all channels is also considered unlimited. This means that the channels can never be depleted. Moreover, this leads to stable fees that do not depend on the value of the routed transaction, because the participants are only interested in the number of transactions routed through their channel since the capital movement does not cost (they have unlimited capacity).

The cost of every  transaction and hence the cost of opening and closing a channel on the blockchain costs the blockchain fee.
We assume a fixed blockchain fee, $F_B\in\mathbb{R}^+ $, i.e., the fee is constant and stable over time. 
Further, we assume the fee is unilaterally paid by the node opening or closing the channel.
We also assume a fee $f_0$ for forwarding a transaction through a channel (that is not owned by the sender of the transaction) is the same for all nodes and stable in time.


\paragraph{Information \& time.} We assume full information, i.e., every node of the system knows the complete payment scenario, and the channels created by other nodes. Although the decision of closing and opening a channel can occur at any time by any participant of the network, we assume a  simultaneous game, i.e., the participants open the channels in the beginning before executing any transactions. 
This is reasonable because we assume a full information game, thus every node knows its optimal strategy apriori.

\paragraph{Notation.}
The set of nodes participating in the network is denoted by $\bm{N}$, and the set of transactions to be executed (payment scenario) by $\bm{P}$. We assume $N>3$ and we use the terms transaction and payment interchangeably  throughout the paper. 
We define as $\bm{R}(x,p)$ the (set of) cheapest route(s) of a transaction $p \in \bm{P}$ in network state $x$. The network state $x$ is dependent on the strategy of the nodes, i.e. which channels the nodes have opened in the network. We note that if there is no route with cost lower than the blockchain fee, the set will return empty and the transaction will be executed on the blockchain.
Thus, the cost for a transaction $p$ for the sender of the transaction on a network state $x$ is the number of edges of the shortest path (for constant fee $f_0$ the cheapest is the shortest path) times the fee $f_0$,
$$f(x,p)=
\begin{cases}
\displaystyle (|\bm{R}(x,p)|-1) \cdot f_0 ,~if~\bm{R}(x,p) \ne \emptyset \\
F_B,~else
\end{cases} $$

On the other hand, the revenue of a node  $u\in\bm N$ from a transaction $p\in\bm P$ when executed on the network in state $x$ is $f_0$ if node $u$ is part of the cheapest route $\bm{R}(x,p)$; otherwise the revenue is zero.

Moreover, the \textit{strategy set} of a node $u\in\bm N$ is the set of all strategies available to node $u$. It is denoted as $\bm S_u$.
The strategy of node $u$ is denoted as $\mu_u\in\bm S_u$. The strategy set in our setting represents the channels a node decides to open at the beginning of the game.
A \textit{strategy combination} is a set containing a strategy for every node. The set of all possible strategy combinations is defined as $\bm S^N := \prod_{u\in\bm N} \bm S_u $, while a strategy combination is denoted by $\bm\mu\in\bm S^N $.
For simplicity, we will abuse the notation $\mu$ and $\mu_u$ to also denote the cardinality of the set, i.e., how many channels are open in the network and how many channels node $u$ opens in the network, respectively.
Last, we define the number of on-chain payments made by a node $u$ as $b_u$ and the total number of on-chain payments as $b=\sum_{\forall u \in \bm{N}} b_u$.

Next, we define the necessary functions for the analysis, namely the cost function, the social cost and the social optimum.

\paragraph{Cost Function.}
The cost function contains the cost for channel creation, on-chain payments, payments routed through the network and the revenue from forwarding payments. For a node $u$ it is defined as 
$$c(\bm{\mu}, \mu_u) = {\mu}_u
\cdot F_B + b_u \cdot F_B + \sum_{p \in \bm{P}: s(p)=u} f(x,p) -\sum_{p \in \bm{P}: u \in \bm{R}(x,p)} f_0$$ where $s(p)$ denotes the sender of transaction $p$. In our setting, sender and receiver of a transaction $p$ are the only relevant pieces of information of each payment, since all fees are independent of the value of a transaction.

\paragraph{Social Cost.}
The social cost (or negative welfare) is the sum of the costs of all nodes
$$-W = \sum_{n\in\bm{N}} c(\bm\mu,\mu_u)=
 (\mu + b) \cdot F_B $$

\paragraph{Social Optimum.}
The social optimum is the minimum social cost, which depends on the number of open channels and the number of payments executed on-chain. This term is minimized when all transactions are executed off-chain and the network forms a tree (connected with minimum number of channels). Hence, the social optimum is $min(-W) = (N-1)\cdot F_B$

\section{Channel Creation Game}
\label{chapter:sim}

First, we show some observations that hold generally under any set of transactions and graph structure. Then, we analyze specific structures and determine under which parameters they constitute a Nash equilibrium.

\subsection{Basic Properties}
\begin{lemma}\label{lem:double_channels}
In a pure Nash equilibrium, no channels are opened twice.
\end{lemma}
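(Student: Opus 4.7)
The plan is to proceed by contradiction. I would assume that a pure Nash equilibrium strategy combination $\bm\mu$ contains two parallel channels between the same pair of nodes $u,v$, and then exhibit a deviation that strictly improves the cost of one of the nodes who paid to open such a channel. Since by assumption the blockchain fee $F_B$ for opening a channel is unilaterally paid by its opener, each of the two parallel channels has a well-defined owner $w\in\{u,v\}$; I would fix attention on one such $w$ and consider the deviation in which $w$ drops its duplicate channel while leaving the rest of its strategy untouched.

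The heart of the argument is a structural observation about the routing: after $w$'s deviation the edge $(u,v)$ still exists in the network, realized by the remaining parallel channel at the same per-hop fee $f_0$. Consequently, for every transaction $p\in\bm P$ the set of cheapest routes $\bm R(x,p)$ and, more importantly, the set of intermediate nodes on any chosen cheapest route are unchanged. From this I would read off, term by term, the effect of the deviation on $c(\bm\mu,\mu_w)$: the count $b_w$ of on-chain payments is the same (the set of affordably-routable transactions does not shrink), the sender cost $\sum_{p:s(p)=w} f(x,p)$ is the same (cheapest routes are preserved), and the forwarding revenue $\sum_{p:w\in\bm R(x,p)} f_0$ is the same (the intermediate nodes on each route are preserved). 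Only the channel-opening term $\mu_w\cdot F_B$ changes, and it decreases by exactly $F_B>0$, so $w$ strictly benefits, contradicting the Nash property.

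The main obstacle I anticipate is the mild ambiguity in what "the cheapest route" selects when two parallel channels realize the same edge. I would address it by observing that under any reasonable tie-breaking, either the routing treats parallel channels as a single abstract edge (in which case removing a duplicate has no effect whatsoever on $\bm R(x,p)$), or it picks one of them (in which case removing the other still leaves an alternative of identical cost and identical endpoints, so the set of nodes on every chosen cheapest route is unchanged). Either way the revenue and payment terms in $w$'s cost function are invariant under the deviation and the contradiction goes through, finishing the proof.
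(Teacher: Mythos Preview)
Your proposal is correct and follows essentially the same approach as the paper: argue by contradiction that the opener of a duplicated channel can drop it and save $F_B$ without altering any other term in the cost function. Your treatment is simply more detailed than the paper's two-line version, in that you explicitly walk through each term of $c(\bm\mu,\mu_w)$ and address the tie-breaking ambiguity for parallel edges.
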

\begin{proof}
(Towards contradiction.) Suppose there is a pure  Nash equilibrium in which two nodes  have opened a channel with each other twice. In this case, each node can reduce his cost by not opening the second channel and thus the strategy cannot be a Nash equilibrium.  \hfill \qed
\end{proof}

\begin{lemma}\label{lem:onchain_payments}
In a pure, strict Nash equilibrium\footnote{If a strategy is always strictly better than all others for all profiles of other players' strategies, then it is strictly dominant. If the strategy is strictly dominant for all players, then it is a strict Nash equilibrium.}, none of the transactions are executed on-chain.
\end{lemma}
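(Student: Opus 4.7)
The plan is to proceed by contradiction in the spirit of Lemma~\ref{lem:double_channels}: assume $\bm\mu$ is a pure strict Nash equilibrium yet some transaction $p$ with sender $s$ and receiver $r$ is routed on-chain in the induced network state $x$, and exhibit an alternative strategy $\mu'_s \neq \mu_s$ with $c(\bm\mu', \mu'_s) \le c(\bm\mu, \mu_s)$. Since the paper defines a strict Nash equilibrium via strict dominance (see the footnote attached to the statement), producing even one such weakly-better deviation already suffices.

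The deviation I would analyse is $\mu'_s := \mu_s \cup \{(s,r)\}$, i.e.\ $s$ additionally opens a direct channel to $r$; let $x'$ be the resulting network state. I would then compare the two costs term by term using the definition of $c$. The channel term $\mu_s \cdot F_B$ grows by exactly $F_B$. On the other hand, transaction $p$ now has a one-edge off-chain route through the new channel, so the on-chain counter $b_s$ drops by one and the contribution $b_s \cdot F_B$ shrinks by $F_B$. These two changes already cancel; it then remains to argue that the sender-fee term $\sum_{p':s(p')=s} f(x,p')$ is non-increasing and the revenue term $\sum_{p': s \in \bm R(x,p')} f_0$ is non-decreasing under the deviation.

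Non-increase of the sender-fee term is immediate: adding the edge $(s,r)$ can only enlarge the set of available routes for every payment, so $f(x',p') \le f(x,p')$ for every $p'$ with $s(p')=s$. Non-decrease of the revenue term is the one delicate step I would write out most carefully. The key observation is that any cheapest route in $x'$ that was \emph{not} already cheapest in $x$ must use the new edge $(s,r)$ and therefore contains $s$; conversely, any old cheapest route that contained $s$ still exists in $x'$, and if it has been superseded, the new cheapest route is strictly shorter, hence by the previous observation uses $(s,r)$ and again contains $s$. So $s$ never loses an $f_0$ it was already collecting, and (for example, on the reverse $r$-to-$s$ payments that were on-chain in $x$ by symmetry of the undirected channel graph) $s$ typically even gains some.

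Summing the four term-by-term comparisons yields $c(\bm\mu', \mu'_s) \le c(\bm\mu, \mu_s)$, so $\mu_s$ is not strictly dominant and $\bm\mu$ cannot be a strict Nash equilibrium, which is the desired contradiction. As indicated, the only step that genuinely needs an argument rather than mechanical accounting is the monotonicity of the revenue term; every other piece is straightforward bookkeeping once the deviation $(s,r)$ has been written down.
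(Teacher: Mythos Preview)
Your proposal is correct and follows essentially the same approach as the paper: both argue by contradiction, having the sender open a direct channel to the receiver of the on-chain transaction and showing the resulting cost is no higher, which contradicts strictness. Your term-by-term accounting is in fact tidier than the paper's case analysis, and your explicit monotonicity argument for the forwarding-revenue term is a detail the paper glosses over.
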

\begin{proof}
(Towards contradiction.) Suppose there is  a pure, strict Nash equilibrium in which a node (sender) executes a transaction on-chain. If there is a channel to the receiver of the transaction, the sender can reduce his cost by simply using the channel. The same holds if there is a path of channels with total fees less than the blockchain fee. Therefore, either there is no cheap path from sender to receiver or no path at all. 
In this case, if the sender has at least two transactions to send to the receiver, he would open a channel and reduce the cost. Thus, the sender has a single transaction to send to the receiver. However, the cost of opening a channel and the cost of executing the transaction on-chain is exactly the same (blockchain fee). If we assume there are no transactions routed through the channel from sender to receiver, the payoff (cost) of the sender in both strategies is the same. Hence, executing the transaction on-chain cannot be a strict Nash equilibrium. Thus, there are transactions routed through the channel from sender to receiver. Then, the payoff of the sender increases and hence the dominant strategy is to open the channel. This contradicts the assumption that executing the transaction on-chain is a Nash equilibrium.\hfill \qed
\end{proof}

Next, we analyse the channel creation game for a homogeneous payment scenario, where every node makes exactly $k \ge 1$ payments to every other node. The number of transactions is therefore $P = k \cdot N \cdot (N-1)$. 
For this payment scenario, we analyze multiple strategy combinations, i.e., different graph structures such as the path, the star, a complete bipartite graph and the clique, to discover under which parameters these graph structures constitute a Nash equilibrium.
We note that all graph structures that are trees (e.g. path, start, complete bipartite graph) are social optima.

\subsection{Path}
\label{subsec:path}
The first graph structure we investigate is the path: each node connects to the node with the next higher ID. The node with the highest ID does not create a channel, but he is connected to the network through the node with the second highest ID. 

\paragraph{Social Cost.}
The social cost is $-W = (N - 1) \cdot F_B$.

\paragraph{Nash Equilibrium.}
A specific strategy is a Nash equilibrium if none of the players can increase their payoff by deviating from it. This means that the path is a NE only if all possible deviations lead to higher cost (or equivalent if it is a weak NE) for the deviating node. 
We observe that for  $f_0=0$, deviating from the path structure cannot decrease the cost, thus the path is a weak NE (similarly to every other tree structure).
However, for any fee $f_0>0$, the first node can increase the revenue from the fees, and thus decrease his cost, by connecting to a middle node on the path and allowing transactions to be routed through his channel. Thus, for any non-zero fee, the path is not a NE.

\subsection{Star}
\label{subsec:star}
The second graph structure we investigate is the star: one node creates channels to everyone else, while the other nodes do not create any channels.  The strategy to create channels to $a\in[0,N-1]$ outer nodes is denoted as $(a)$.

\paragraph{Cost Functions.}
The cost of the center node is
$c(\bm\mu,(N-1)) = (N-1) \cdot F_B - (N-1) \cdot (N-2) \cdot k \cdot f_0 $.\\
The cost of the outer nodes is
$c(\bm\mu,(0)) = (N-2) \cdot k \cdot f_0 $.\\
The social cost is
$-W=(N-1) \cdot F_B$.

\paragraph{Nash Equilibrium.}
A node can only deviate from the strategy as follows: An outer node creates channels to $a\in[1,N-2]$ other outer nodes. This holds because in the homogeneous payment scenario, a pure Nash equilibrium demands a connected graph, else Lemma~\ref{lem:onchain_payments} is violated. This means the center node will not disconnect the graph. Moreover, from Lemma~\ref{lem:double_channels} no outer node will create a channel to the center node.

If an outer node creates channels to $a\in[1,N-2]$ other outer nodes, his cost function is $c(\bm\mu,(a)) = a \cdot F_B + (N - 2 - a) \cdot k \cdot f_0 - a \cdot (a-1) \cdot k \cdot \frac{1}{2} \cdot f_0$.

If there is an $a\in[1,N-2]$ for which the cost function of an outer node is decreased, then the star is not a NE. 
Since the second derivative with respect to $a$ is strictly negative, we only have to check the corner cases below:

\begin{itemize}
    \item For $a=1$: $c(\bm\mu,(0)) < c(\bm\mu,(1)) \iff f_0 < \frac{F_B}{k} $.
    \item For $a=N-2$: 
$c(\bm\mu,(0)) < c(\bm\mu,(N-2)) \iff f_0 < \frac{F_B}{k} \cdot \frac{2}{N-1} $.
\end{itemize}

Thus, for $N>3$, the star is a NE when 
\[f_0 < \frac{F_B}{k} \cdot \frac{2}{N-1} \]

\subsection{Star with two centers}
\label{subsec:2star}
We observe that for a very low constant fee any star can be a Nash equilibrium. Further, we notice that if the fee is high enough the dominant strategy for the outer nodes is to create more channels, eventually becoming the center of a second star. We examine this exact case, where there are two center nodes, each creating channels to all outer nodes, but not to each other. The outer nodes do not create any channels. We denote the strategy to create $a\in[0,2]$ channels to center nodes and $b\in[0,N-2]$ channels to outer nodes as $(a,b)$.

\paragraph{Cost Functions.}
The cost of the center nodes is
$c(\bm\mu,(0,N-2)) = (N-2) \cdot F_B + k \cdot f_0 - (N-2) \cdot (N-3) \cdot k \cdot \frac{1}{2} f_0 $.\\
The cost of the outer nodes is
$c(\bm\mu,(0,0)) = (N-3) \cdot k \cdot f_0 - 2 \cdot k \cdot \frac{1}{(N-2)} f_0 $.\\
The social cost is
$-W = 2 \cdot (N-2) \cdot F_B $.

\paragraph{Nash Equilibrium.}
The nodes can deviate as follows:
\begin{enumerate}[label=(\Alph*)]
    \item  A center node creates channels to $b\in[1,N-3]$ outer nodes.
    \item  A center node creates channels to $b\in[0,N-2]$ outer nodes and creates a channel to the other center node.
    \item  An outer node creates channels to $b\in[1,N-3]$ other outer nodes.
\end{enumerate}
We analyze each case to determine the parameter space for which these deviations do not decrease the cost of a node, and thus the two center structure is a NE.

\paragraph{\textbf{(Deviation A)}}
If a center node created channels to only $b\in[1,N-3]$ outer nodes, his cost function would become
\[c(\bm\mu,(0,b)) = b \cdot F_B + k \cdot f_0 + (N-2-b) \cdot k \cdot 2f_0 - b \cdot (b-1) \cdot k \cdot \frac{1}{2} f_0 \]
This cost function must be higher than $c(\bm\mu,(0,N-2))$ for all $b$. Since the second derivative w.r.t. $b$ is strictly negative, we only have to check the corner cases:
\begin{itemize}
    \item For $b=1$: $c(\bm\mu,(0,N-2)) < c(\bm\mu,(0,1)) \iff f_0 > \frac{F_B}{k} \cdot \frac{2}{N+2} $.
    \item For $b=N-3$: $c(\bm\mu,(0,N-2)) < c(\bm\mu,(0,N-3)) \iff f_0 > \frac{F_B}{k} \cdot \frac{1}{N-1} $
\end{itemize}

\paragraph{\textbf{(Deviation B)}}
If an center node creates a channel to the other center node and channels to $b\in[0,N-2]$ outer nodes, his cost function is
\[c(\bm\mu,(1,b) = (b+1) \cdot F_B + (N-2-b) \cdot k \cdot f_0 - b \cdot (b-1) \cdot k \cdot \frac{1}{2} f_0 \]
This cost function must be higher than $c(\bm\mu,(0,N-2))$ for all $b$. Similarly, we only have to check the corner cases:
\begin{itemize}
    \item For $b=0$: $c(\bm\mu,(0,N-2)) < c(\bm\mu,(1,0)) \iff f_0 < \frac{F_B}{k} \cdot \frac{2}{N}  $.
    \item For $b=N-2$: $c(\bm\mu,(0,N-2)) < c(\bm\mu,(1,N-2)) \iff f_0 > \frac{F_B}{k}$.
\end{itemize}

\paragraph{\textbf{(Deviation C)}}
If an outer node creates channels to $b\in[1,N-2]$ other outer nodes, his cost function is
\[c(\bm\mu,(0,b)) = b \cdot F_B + (N-3-b) \cdot k \cdot f_0 - 2 \cdot k \cdot \frac{1}{(N-2)} f_0 - b \cdot (b-1) \cdot k \cdot \frac{1}{3}f_0 \]
This cost function must be higher than $c_o(\bm\mu,(0,0))$ for all $b$. Similarly, we only have to check the corner cases:
\begin{itemize}
    \item For $b=1$: $c(\bm\mu,(0,0)) < c(\bm\mu,(0,1)) \iff f_0 < \frac{F_B}{k} $.
    \item For $b=N-3$: $c(\bm\mu,(0,0)) < c(\bm\mu,(0,N-3)) \iff f_0 < \frac{F_B}{k} \cdot \frac{3}{N-1}$.
\end{itemize}

Combining all the bounds from the deviating strategies, for $N>3$, we derive the parameter space for which the two center structure is a NE. Specifically, the conditions reduce to 
\[\frac{F_B}{k} \cdot \frac{2}{N} < f_0 < \frac{F_B}{k} \cdot \frac{3}{N-1} \]

\subsection{Complete bipartite graph}
\label{subsec:bipartite}
Previously, we showed that stars with one or two center nodes can be a Nash equilibrium, if there is a constant fee that fulfills certain conditions. 
Furthermore, if $f_0$ is high, outer nodes decrease their cost by creating channels to other outer nodes. Intuitively, this leads to a NE that is a bipartite graph structure.
We study exactly this case:
$c\in[2,\sfrac{N}{2}]$ nodes build a center by creating channels to everyone else but each other. 
For simplicity we denote the number of outer nodes by $d:=N-c$. 
The network structure now is a complete bipartite graph with $c$ nodes in the smaller partition and $d$ nodes in the larger partition. We denote the strategy to create channels to $a\in[0,c]$ center nodes and to $b\in[0,d]$ outer nodes as $(a,b)$.

\paragraph{Cost Functions.}
The cost of the center nodes is
$c(\bm\mu,(0,d)) = d \cdot F_B + (c-1) \cdot k \cdot f_0 - d \cdot (d-1) \cdot k \cdot \frac{1}{c} f_0 $.\\
The cost of the outer nodes is 
$c(\bm\mu,(0,0)) = (d-1) \cdot k \cdot f_0 - c \cdot (c-1) \cdot k \cdot \frac{1}{d} f_0 $.\\
The social cost is
$-W = c \cdot (N-c) \cdot F_B $.

\paragraph{Nash Equilibrium.}
The nodes can deviate as follows:
\begin{enumerate}[label=(\Alph*)]
    \item  A center node creates channels to only $b\in[1,d-1]$ outer nodes.
    \item  A center node creates channels to $a\in[1,c-1]$ center nodes and to $b=0$ outer nodes.
    \item  A center node creates channels to $a\in[1,c-1]$ center nodes and to $b\in[1,d]$ outer nodes.
    \item  An outer node creates channels to $b\in[1,d-1]$ other outer nodes.
\end{enumerate}
Next, we discover the parameter space for which the strategies above lead to the increase of the cost function of a node. 

\paragraph{\textbf{(Deviation A)}}
If a center node created channels to only $b\in[1,d-1]$ outer nodes, his cost function would become
\[c(\bm\mu,(0,b)) = b \cdot F_B + (c-1) \cdot k \cdot f_0 + (d-b) \cdot k \cdot 2f_0 - b \cdot (b-1) \cdot k \cdot \frac{1}{c} f_0 \]
This cost function must be higher than $c(\bm\mu,(0,d))$  for all $b$. Since the second derivative w.r.t. $b$ is strictly negative, we only  check the corner cases:
\begin{itemize}
    \item For $b=1$: $c(\bm\mu,(0,d)) < c(\bm\mu,(0,1)) \iff \frac{F_B}{k} \cdot \frac{c}{N+c} < f_0  $.
        \item For $b=d-1$: $c(\bm\mu,(0,d)) < c(\bm\mu,(0,d-1))  \iff \frac{F_B}{k} \cdot \frac{c}{2N-2} < f_0$.
\end{itemize}

\paragraph{\textbf{(Deviation B)}}
If a center node creates channels to $a\in[1,c-1]$ other center nodes and to $b\in[1,d]$ outer nodes, his cost function is
\[c(\mu\bm,(a,b)) = (a+b) \cdot F_B + (d-b) \cdot k \cdot f_0 + (c-1-a) \cdot k \cdot f_0 \]
\[ - b \cdot (b-1) \cdot k \cdot \frac{1}{c} f_0 - a \cdot (a-1) \cdot k \cdot \frac{1}{d+1} f_0 \]
This cost function must be higher than $c(\bm\mu,(0,d))$. Since the second derivatives w.r.t. $a$ and $b$ are strictly negative, we only have to check the corner cases:
\begin{itemize}
    \item For $a=1,b=1$: \[c(\bm\mu,(0,d)) < c(\bm\mu,(1,1)) \iff \frac{F_B}{k} \cdot \frac{cN-c^2-2c}{N^2 - cN + N - 3c} < f_0\]
    \item For $a=1,b=d$: $c(\bm\mu,(0,d)) < c(\bm\mu,(1,d)) \iff f_0 < \frac{F_B}{k}$
    \item For $a=c-1,b=1$: \[c(\bm\mu,(0,d)) < c(\bm\mu,(c-1,1)) \iff \frac{F_B}{k} \cdot \frac{cN^2 - 3c^2N + cN + 2c^3 - 2c^2}{N^3 - 2cN^2 + cN - N + c^2 - c} <  f_0\]
    \item For $a=c-1,b=d$: $c(\bm\mu,(0,d)) < c(\bm\mu,(c-1,d)) \iff f_0 < \frac{F_B}{k} \cdot \frac{N-c+1}{N-1} $
\end{itemize}

\paragraph{\textbf{(Deviation C)}}
If a center node creates channels to $a\in[1,c-1]$ other center nodes and to $b=0$ outer nodes, his cost function is
\[c(\bm\mu,(a,0)) = a \cdot F_B + d \cdot k \cdot f_0 + (c-1-a) \cdot k \cdot 2f_0 - (a) \cdot (a-1) \cdot k \cdot \frac{1}{d+1} f_0 \]
This cost function must be higher than $c(\bm\mu,(0,d))$ for all $a$. Since the second derivative w.r.t. $a$ is strictly negative, we only have to check the corner cases:
\begin{itemize}
    \item For $a=1$: \[c(\bm\mu,(0,d)) < c(\bm\mu,(1,0)) \iff \frac{F_B}{k} \cdot \frac{cN-c^2-c}{N^2-cN-N+c^2-2c} < f_0 \]
    \item For $a=c-1$: \[c(\bm\mu,(0,d)) < c(\bm\mu,(c-1,0))  \iff \frac{F_B}{k} \cdot \frac{cN^2 - 3c^2N + 2cN + 2c^3 - 3c^2 + c}{N^3 - 2cN^2 + 2cN - N} < f_0\]
\end{itemize}

\paragraph{\textbf{(Deviation D)}}
If an outer node creates channels to $b\in[1,d-1]$ other outer nodes, his cost function is
\[c(\bm\mu,(0,b)) = b \cdot F_B + (d-1-b) \cdot k \cdot f_0 - c \cdot (c-1) \cdot k \cdot \frac{1}{d} f_0 - b \cdot (n-1) \cdot k \cdot \frac{1}{c+1}f_0 \]
This cost function must be higher than $c_o(\bm\mu,(0,0))$ for all $b$. Since the second derivative w.r.t. $b$ is strictly negative, we only have to check the corner cases:
\begin{itemize}
    \item For $b=1$: $c(\bm\mu,(0,0)) < c(\bm\mu,(0,1)) \iff f_0 < \frac{F_B}{k} $
    \item For $b=d-1$: $c(\bm\mu,(0,0)) < c(\bm\mu,(0,d-1)) \iff f_0 < \frac{F_B}{k} \cdot \frac{c+1}{N-1}$
\end{itemize}


From the analysis on the possible deviations from the strategy, we derive multiple upper and lower bounds for the value of $f_0$. For $N>3$ and $2\leq c \leq \sfrac{N}{2}$ these conditions reduce to the following:
\[f_0 > \frac{F_B}{k} \cdot \frac{cN-c^2-2c}{N^2 - cN + N - 3c} \]
\[f_0 > \frac{F_B}{k} \cdot \frac{cN-c^2-c}{N^2-cN-N+c^2-2c} \]
\[f_0 < \frac{F_B}{k} \cdot \frac{c+1}{N-1} \]

We have defined not only one Nash equilibrium in this analysis, but a whole class of Nash equilibria. Table~\ref{tab:sim:bounds} shows the numerical values for the bounds of a complete bipartite graph as Nash equilibrium. Figure~\ref{fig:bounds_plot} shows a plot of the bounds for $N=10^3$. The Nash equilibria lay in the thin area between the lowest red and the highest blue line.

\begin{table}[ht]
    \centering
    \begin{tabular}{c|c|c|c|c|c}
    N & c & lower bound [$\frac{F_B}{k}$] & upper bound [$\frac{F_B}{k}$] & active lb & active ub \\
    \hline
    $10^3$  & 2     & $.2000000\cdot10^{-2}$   & $.30030\cdot10^{-2}$  & 5 & 3 \\
            & 3     & $.2999991\cdot10^{-2}$   & $.40040\cdot10^{-2}$  & 5 & 3 \\
            & 5     & $.4999925\cdot10^{-2}$   & $.60060\cdot10^{-2}$  & 5 & 3 \\
            & 10    & $.9999192\cdot10^{-2}$   & $.11011\cdot10^{-1}$  & 5 & 3 \\
            & 100   & $.9970024\cdot10^{-1}$   & $.10110$               & 3 & 3 \\
            & 499   & $.4975016$                & $.50050$               & 3 & 3 \\
            & 500   & $.4984984$                & $.50150$               & 3 & 3 \\
    \hline
    $10^4$  & 2     & $.20000\cdot10^{-3}$     & $.30003\cdot10^{-3}$  & 5 & 3 \\
            & 3     & $.29997\cdot10^{-3}$     & $.40004\cdot10^{-3}$  & 5 & 3 \\
            & 5     & $.49995\cdot10^{-3}$     & $.60006\cdot10^{-3}$  & 5 & 3 \\
            & 10    & $.99990\cdot10^{-3}$     & $.11001\cdot10^{-2}$  & 5 & 3 \\
            & 100   & $.99981\cdot10^{-2}$     & $.10101\cdot10^{-1}$  & 5 & 3 \\
            & 1000  & $.99971\cdot10^{-1}$     & $.10011$               & 3 & 3 \\
            & 4999  & $.49976$                  & $.50005$               & 3 & 3 \\
            & 5000  & $.49985$                  & $.50015$               & 3 & 3 \\
    \hline
    $10^5$  & 2     & $.200000\cdot10^{-4}$    & $.300003\cdot10^{-4}$ & 5 & 3 \\
            & 3     & $.299997\cdot10^{-4}$    & $.400004\cdot10^{-4}$ & 5 & 3 \\
            & 5     & $.499995\cdot10^{-4}$    & $.600006\cdot10^{-4}$ & 5 & 3 \\
            & 10    & $.999990\cdot10^{-4}$    & $.110001\cdot10^{-3}$ & 5 & 3 \\
            & 100   & $.999990\cdot10^{-3}$    & $.101001\cdot10^{-2}$ & 5 & 3 \\
            & 1000  & $.999971\cdot10^{-2}$    & $.100101\cdot10^{-1}$ & 3 & 3 \\
            & 10000 & $.999971\cdot10^{-1}$    & $.100011$              & 3 & 3 \\
            & 49999 & $.499976$                 & $.500005$              & 3 & 3 \\
            & 50000 & $.499985$                 & $.500015$              & 3 & 3 \\
    \end{tabular}
    \caption{Numerical results for the lower and bounds for a complete bipartite graph as a Nash equilibrium.}
    \label{tab:sim:bounds}
\end{table}

\begin{figure}[ht]
\centering
\begin{minipage}{.5\textwidth}
  \centering
    \includegraphics[width=0.9\textwidth]{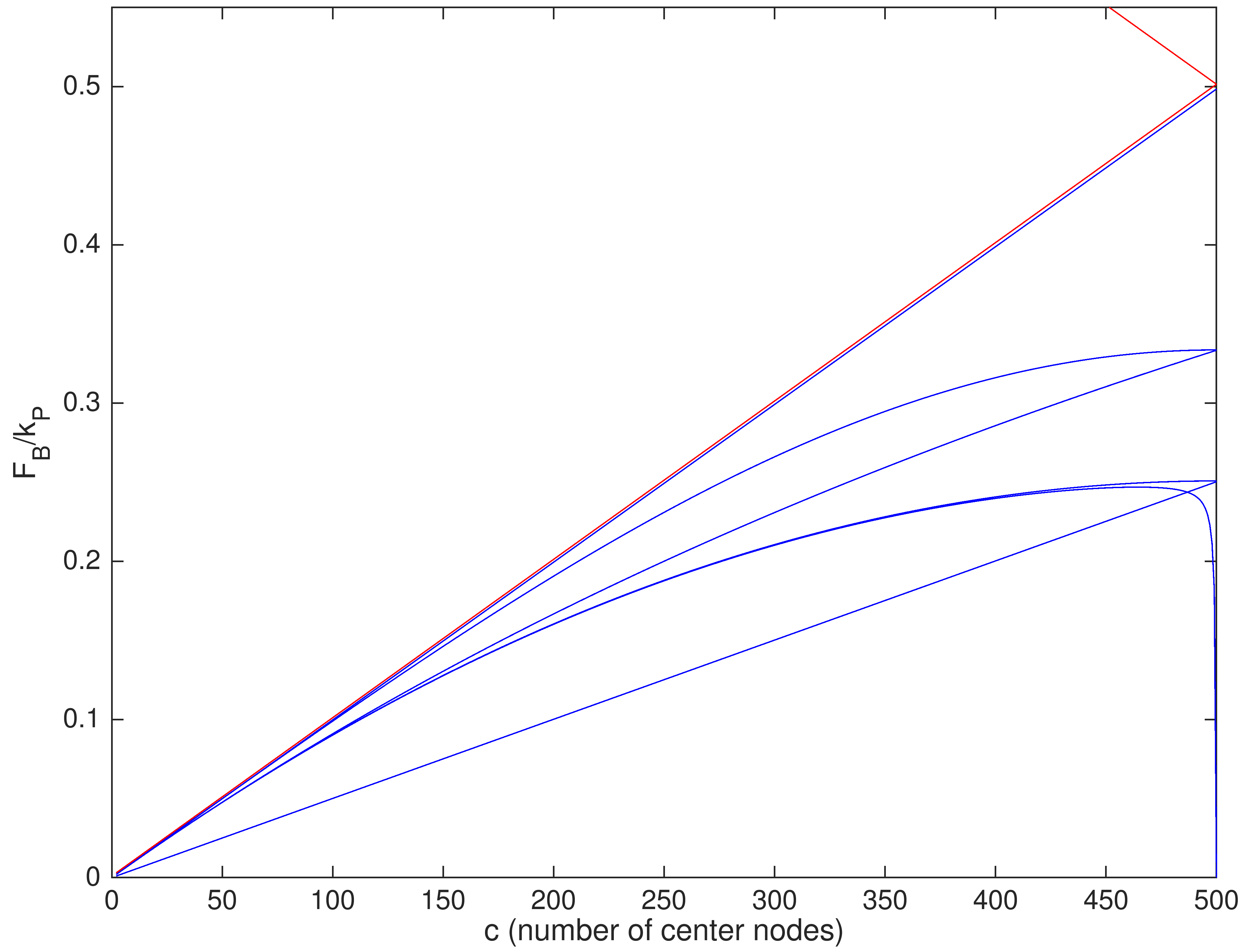}
\end{minipage}%
\begin{minipage}{.5\textwidth}
  \centering
    \includegraphics[width=0.9\textwidth]{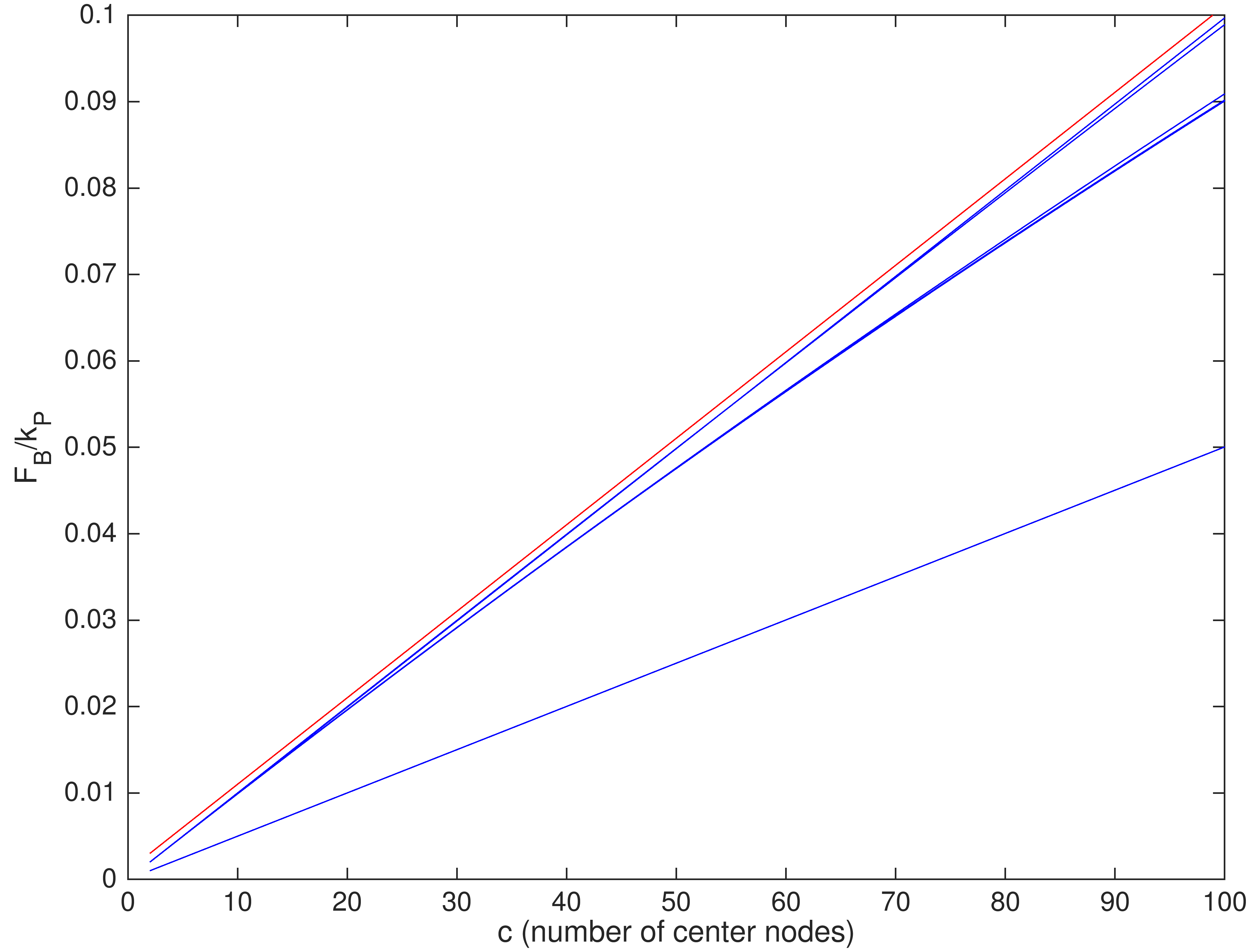}
\end{minipage}
\caption{Plots of the bounds for $N=10^3$ with the upper bounds in red and the lower bounds in blue.}
\label{fig:bounds_plot}
\end{figure}



\subsection{Clique}
\label{subsec:clique}
The last graph structure we investigate is the clique, i.e., the complete graph. In this case, the $i$-th node opens $N-i$ channels. The strategy of creating channels to $a$ nodes without self-loops is denoted as $(a)$.

\paragraph{Cost Functions.}
The cost of the $i$-th node is $c(\bm\mu,(N-i)) = (N-i) \cdot F_B$.
The social cost is $-W = \frac{N \cdot (N - 1)}{2} \cdot F_B$.

\paragraph{Nash Equilibrium.}
The nodes can deviate as follows:
\begin{enumerate}[label=(\Alph*)]
    \item The first node creates channels to only $a\in[1,N-2]$ other nodes.
    \item Node $i$ (but not the first or last one) creates channels to only $a\in[0,N-i-1]$ nodes from the set of nodes he would originally connect to (node $i+1$ to node $N$).
\end{enumerate}
Now, we analyze these deviation strategies to explore the parameter space for which the strategies above lead to the increase of the cost function of a node. 

\paragraph{\textbf{(Deviation A)}}
If the first node creates channels to $a\in[1,N-2]$ other nodes his cost function is
$c(\bm\mu,(a)) = a \cdot F_B + (N-1-a) \cdot k \cdot f_0 $.
This cost function must be higher than $c(\bm\mu,(N-1))$ for all $a$.
Thus, 
$c(\bm\mu,(N-1)) < c(\bm\mu,(a)) 
\iff f_0 > \frac{F_B}{k} $.

\paragraph{\textbf{(Deviation B)}}
If node $i$ (not the first or last one) creates channels to $a\in[0,N-i-1]$ nodes from the set of nodes he would originally connect to (node $i+1$ to node $N$), his cost functions is
$c(\bm\mu,(a)) = a \cdot F_B + (N-i-a) \cdot k \cdot f_0 $.
This cost function must be higher than $c(\bm\mu,(N-i))$ for all $a$.
Thus,
$c(\bm\mu,(N-i)) < c(\bm\mu,(a)) 
\iff f_0 > \frac{F_B}{k} $.

\paragraph{}To summarize, the clique is a Nash equilibrium for $f_0 > \frac{F_B}{k}$.


\subsection{The fee game}
\label{sec:sim:fee_game}
In this subsection we investigate how the nodes set the fees, if the network structure is fixed to one of the previously found Nash equilibria. Especially, we try to find a Nash equilibrium, which also holds for the conditions discussed in the previous subsections. Therefore, we slightly change the model as follows: $x$ is a Nash equilibrium from the previous subsections. The nodes can only set a constant fee on each of their channels. Further, the nodes cannot create new channels. We still consider simultaneous game play, i.e., the nodes must (simultaneously) choose their strategy before any transaction is executed in the network. The payment scenario is still the same (homogeneous).

\paragraph{Complete Bipartite Graph.}
We start with a complete bipartite graph with $c\in[2,\sfrac{N}{2}]$ nodes in the smaller partition creating the channels. The goal of this analysis is to gain a better intuition of the fee evolution and therefore determine the number of hubs that will be eventually established.

We make following statement about the Nash equilibrium of the described game.

\begin{lemma}
\label{lem:fees}
For $k>2$, a strategy combination is a (weak) Nash equilibrium if and only if there are at least two node-distinct paths which are free (zero fees) for all indirect transactions.
\end{lemma}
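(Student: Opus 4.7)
My plan is to view the fee-setting game on a fixed complete bipartite graph as classical Bertrand price competition between intermediaries offering the same routing service, noting that because channels are already open and capital is unlimited, each node's marginal cost of forwarding an additional transaction is zero. I will prove the two directions separately: the easy direction by construction and the hard direction by contradiction.

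\textbf{($\Leftarrow$)} Assume that for every indirect transaction $p$ there exist two node-distinct zero-fee paths. Then the cheapest cost of routing $p$ is $0$ and every intermediary on a zero-fee path earns nothing from $p$. I will show that no node $u$ can strictly improve its payoff by unilaterally changing its fee $f_u$. Raising $f_u$ either leaves $u$ strictly dearer than the competing zero-fee path (so $u$'s zero revenue on those transactions simply stays zero), or pushes $u$ off a cheapest route (again a wash, since $u$ was earning $0$ anyway). Lowering $f_u$ can at best add $u$ to the pool of free routes but yields no extra revenue. Every deviation is weakly worse, hence the profile is a weak NE.

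\textbf{($\Rightarrow$)} Assume a weak NE and suppose for contradiction that some indirect transaction $p$ admits at most one node-distinct zero-fee path. I will split into two cases. \emph{Case 1: no zero-fee path for $p$.} The cheapest cost for $p$ is strictly positive, and a standard Bertrand argument exhibits a profitable deviation: if the cheapest path is uniquely via intermediary $u$, then $u$ can raise $f_u$ by less than the gap to the second-cheapest route and still be strictly cheapest, strictly increasing revenue on $p$ and on every transaction sharing the same alternatives; if several intermediaries tie for cheapest, one of them can undercut by a small $\epsilon$ to capture the entire demand, which also strictly increases revenue. \emph{Case 2: exactly one zero-fee path through $u$.} Then $u$ currently earns $0$ on transactions like $p$, but can raise $f_u$ to any value in the open interval below the next-cheapest alternative and strictly increase revenue. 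Either case contradicts the assumed NE.

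The main obstacle is bookkeeping the coupling across transactions: when $u$ changes $f_u$, every transaction routed through $u$ is affected, so the revenue comparison must be net across all of them, not just $p$. Fortunately, the bipartite layout makes this clean: a center $u$'s fee affects only the $k \cdot d(d-1)$ outer-to-outer transactions routed through $u$, all of which are symmetric with respect to $u$'s fee, and an outer $u$'s fee affects only the $k \cdot c(c-1)$ symmetric center-to-center transactions; direct (one-hop) transactions pay no intermediary fee and are untouched by any node's deviation. Thus the net revenue change from a deviation aggregates into a single Bertrand inequality, and the hypothesis $k>2$ provides enough indirect transaction volume that the aggregated revenue increment is strictly positive and no tie-breaking degeneracy can neutralize it.
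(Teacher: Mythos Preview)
Your proof is correct and follows the same Bertrand-competition logic as the paper's: the backward direction checks that no fee change can help when two free paths already exist, and the forward direction exhibits a profitable fee deviation (raise if you are the unique cheapest or the sole free route, undercut if tied at a positive price). One difference worth noting: in the subcase where a single positive-fee path exists, the paper has a \emph{third} node open a new channel to compete---which actually sits awkwardly with the stated fee-game model in which channels are fixed---whereas you stay entirely within fee adjustments and let the monopolist intermediary itself raise its price. Your variant is the more internally consistent reading of the model. You also make the coupling across transactions explicit via the bipartite symmetry (a center's fee affects all outer--outer pairs identically, and dually for outer nodes), which the paper leaves implicit. One small point: neither you nor the paper makes crisp use of the hypothesis $k>2$; the Bertrand inequalities you invoke already go through for any $k\ge 1$, so your closing appeal to $k>2$ is decorative rather than load-bearing.
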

\begin{proof}
$(\rightarrow)$ Suppose there is a Nash equilibrium in which there is a set of $k$ indirect transactions (according to the payment scenario, each pair of sender and receiver executes $k$ transactions) that can be routed only through a single path with positive fees. Then, another node will open a channel to connect the two nodes and increase his payoff, since on expectation half the $k$ transactions will go through the new channel. Contradiction. 

Suppose now, there is only a single path with zero fees to route the transaction. Then, the nodes acting as intermediaries will increase the fee almost matching the price of the blockchain fee. In such a case, we have the same effect as described above. Contradiction.

Therefore, there cannot be a single path connecting any two nodes in the network.
Suppose now there are multiple node-distinct paths that connect sender and receiver, but with positive fees. Then, each node of these paths will decrease the fee in an attempt to win out the competition by being the cheapest path. Contradiction.
Therefore, any strategy that is a Nash equilibrium must contain at least two node-distinct paths for each pair of sender and receiver. 

$(\leftarrow)$ If there exist a path with zero fees for every transaction, no node stands to gain from opening a new channel. Furthermore, increasing the fee in any path will not lead to the decrease of the cost function (higher revenue) since the path containing the non-zero fee will be ignored and no transactions will be routed through such a path.
Thus, no node can gain from choosing a different strategy, i.e., the strategy combination is a Nash equilibrium. \hfill \qed
\end{proof}

\begin{theorem}
The complete bipartite graph is not a Nash Equilibrium when the nodes are free to chose the fees on their channels.
\end{theorem}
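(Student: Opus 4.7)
The plan is to combine Lemma~\ref{lem:fees} with a channel-closing deviation to derive a contradiction. Suppose for contradiction that the complete bipartite graph on $c\in[2,\sfrac{N}{2}]$ centers and $d=N-c$ outer nodes is a Nash equilibrium under some fee assignment. Since every outer-outer transaction traverses exactly one center as its sole intermediate, Lemma~\ref{lem:fees} forces at least two centers to charge zero forwarding fee; fix two such zero-fee centers $c_1$ and $c_2$.

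Next I would compute $c_1$'s payoff in the putative equilibrium. Because $c_2$ provides a zero-fee alternative for every outer-outer route that passes through $c_1$, the cheapest cost of any outer-outer transaction is zero and $c_1$ either splits a zero fee or lies off the cheapest path altogether; its forwarding revenue is therefore $0$. Its own outgoing transactions likewise all cost $0$ (direct to each neighboring outer, via $c_2$ to another outer, or via any outer to another center, since outer nodes own no channels and hence forward for free). Thus $c_1$'s total cost equals its channel-opening expenditure $d\cdot F_B$.

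The key step is to exhibit the following deviation for $c_1$: close all but one channel, retaining only the edge to some outer node $o_i$, and keep the fee at zero. I would then verify (i) every outgoing transaction of $c_1$ remains free, routed as $c_1\to o_i\to c_k$ to reach another center or $c_1\to o_i\to c_k\to o_j$ to reach another outer, with every intermediate still charging zero; and (ii) $c_1$'s forwarding revenue stays zero because its unique neighbor $o_i$ is adjacent to every other center at identical zero cost, so no shortest path uses $c_1$ as an intermediate. Consequently $c_1$'s deviated cost is $F_B<d\cdot F_B$ for $d\geq 2$, contradicting the Nash property.

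The main obstacle is the routing-cost verification in (i): I must ensure that none of $c_1$'s outgoing transactions exceeds the blockchain cost $F_B$ and therefore falls back on-chain. This is exactly what is guaranteed by the remaining $c-1\geq 1$ zero-fee centers, which still cover every outer node, so every route from $c_1$ has cost $0<F_B$. A degenerate subcase in which Lemma~\ref{lem:fees}'s precondition of two zero-fee paths is not even met yields the contradiction immediately, with no deviation required.
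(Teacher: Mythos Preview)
Your argument is correct and follows the same two-step skeleton as the paper: invoke Lemma~\ref{lem:fees} to force (at least some) forwarding fees to zero, then exhibit a structural deviation that becomes profitable at zero fee. The paper's proof is a one-liner that simply cites the strictly positive lower bound on $f_0$ derived in Subsection~\ref{subsec:bipartite}; Lemma~\ref{lem:fees} drives the relevant fees to zero, which violates that lower bound, done. What you do is unpack that second step by hand: your ``close all but one channel'' move is precisely Deviation~A of Subsection~\ref{subsec:bipartite} evaluated at $b=1$, and your cost comparison $d\cdot F_B$ versus $F_B$ is exactly why that corner case yields a positive lower bound on $f_0$. So your route is not genuinely different, just more explicit; the payoff is that you make transparent why the fixed-fee lower bound still bites in the heterogeneous-fee setting (because a zero-fee center earns no forwarding revenue regardless of what the other centers charge), a point the paper's one-line proof leaves implicit. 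The one modelling assumption you lean on---that outer nodes, owning no channels, forward for free---is a reasonable reading of the fee game as stated, though you might flag it as such.
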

\begin{proof}
Follows immediately from Lemma \ref{lem:fees} and the lower bound established in subsection \ref{subsec:bipartite}.\hfill \qed
\end{proof}


\paragraph{Star.}
Next, we consider the fee evolution when the network structure is a star. The reason we proceed with this specific network structure is the previous observations; when nodes can freely chose the fees they impose on their channels, having multiple paths leads to zero-fee paths. Intuitively, the star does not suffer from this problem.

Particularly, we notice that the center node is the only node that can charge fees, since all transactions are routed through the center node. However, in subsection \ref{subsec:star}, we showed that there is an upper bound on the value of the fee the center node can ask for; otherwise other nodes will deviate from the strategy combination and form a second hub. 

\begin{corollary}
The star is a pure Nash equilibrium when $f_0 = \frac{F_B}{k} \cdot \frac{2}{N-1} - \epsilon$.
\end{corollary}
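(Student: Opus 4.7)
The plan is to verify that the strategy profile ``center creates all $N-1$ channels and charges $f_0 = \frac{F_B}{k}\cdot\frac{2}{N-1}-\epsilon$; outer nodes create no channels'' is a pure Nash equilibrium by checking every unilateral deviation against the cost functions and bounds already derived in subsection~\ref{subsec:star}. Since outer nodes have no channels in this configuration, they have no fees of their own to choose, so their only available deviation is a structural one; the center's deviations come in three flavors (close a channel, lower the fee, raise the fee). I would dispatch them in this order.

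First, for the outer nodes I would directly invoke the inequality derived in subsection~\ref{subsec:star}: the two corner cases $a=1$ and $a=N-2$ imposed the thresholds $f_0<\frac{F_B}{k}$ and $f_0<\frac{F_B}{k}\cdot\frac{2}{N-1}$ respectively, and for $N>3$ the second is strictly tighter. Substituting $f_0=\frac{F_B}{k}\cdot\frac{2}{N-1}-\epsilon$, both inequalities hold strictly (by at least $\epsilon$), and by the concavity argument already used in 3.3 (negative second derivative in $a$) this suffices for all intermediate $a$. Hence no outer node strictly benefits from opening any subset of channels.

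Second, for the center I would argue: closing a channel to an outer node $v$ forces every transaction that involved $v$ either onto the blockchain or onto $v$'s own newly-created channel, eliminating the $(N-2)k$ fee-generating payments that $v$ participates in as intermediate endpoint — a strict loss. Lowering $f_0$ is immediately a strict loss since the revenue term $(N-1)(N-2)k\,f_0$ is linear and increasing in $f_0$ while none of the other cost terms depend on $f_0$. The delicate case is raising $f_0$: this is where the $\epsilon$ becomes essential. Any increase pushes the fee to or above $\frac{F_B}{k}\cdot\frac{2}{N-1}$, at which point the outer-node deviation with $a=N-2$ analyzed in 3.3 becomes (weakly or strictly) profitable, so the outer nodes' ``do nothing'' strategy ceases to be a best response and the profile is no longer a Nash equilibrium.

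The main obstacle is precisely this last step, because under the strict simultaneous-move definition of Nash equilibrium the center evaluates a fee change holding the outer nodes' actions fixed — and with fixed inaction the center's revenue is monotonically increasing in $f_0$ all the way up to $F_B$. I would therefore make explicit the interpretation already used implicitly in subsection~\ref{sec:sim:fee_game}: we characterize \emph{mutual} best responses, so a candidate fee is admissible only if some pure strategy of the outer nodes remains a best response to it, and as $f_0$ crosses $\frac{F_B}{k}\cdot\frac{2}{N-1}$ the unique best response of the outer nodes switches to the competing-hub strategy of 3.3. This identifies $\frac{F_B}{k}\cdot\frac{2}{N-1}-\epsilon$ as the supremal fee for which the star profile is a pure Nash equilibrium, which is exactly the claim.
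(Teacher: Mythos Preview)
Your approach matches the paper's: the paper supplies no formal proof of the corollary, only the preceding sentence pointing back to the upper bound of subsection~\ref{subsec:star} and observing that the center is the sole fee-setter. Your explicit checks of the outer-node deviations and of the center's ``close a channel'' and ``lower the fee'' moves are correct and strictly more detailed than anything the paper writes down.

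You are also right that the ``raise the fee'' case is the real obstacle, and you have in fact located a gap that sits in the paper rather than in your write-up. Under the simultaneous-move convention the paper explicitly retains in Section~\ref{sec:sim:fee_game}, the center's best response to fixed outer-node inaction is to push $f_0$ upward (revenue is linear in $f_0$), so the profile with $f_0=\frac{F_B}{k}\cdot\frac{2}{N-1}-\epsilon$ is \emph{not} a Nash equilibrium in the strict sense: the center has a profitable unilateral deviation. The paper sidesteps this by writing ``otherwise other nodes will deviate \ldots\ and form a second hub,'' which is a Stackelberg (leader--follower) argument, not a simultaneous one. Your proposed resolution --- admit only fees at which outer-node inaction remains a best response --- is the same implicit leader--follower reading, just stated openly. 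So your proof is as sound as the paper's, and you have correctly diagnosed where both rely on an interpretation that goes beyond the stated model.
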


\section{Related Work}\label{chapter:relw}

Payment channels were originally introduced by Spilman~\cite{spilman2013channels}. The core idea was to use unidirectional channels with a predefined sender and receiver. Later, various constructions for bidirectional payment channels were proposed~\cite{spilman2013channels,decker2015fast,poon2015lightning,decker2018eltoo,avarikioti2019brick}. They all use a common account for the parties and off-chain exchange of signed transactions proving the state of the channel. The creation of multiple such channels on a common blockchain network leads to the formation of channel networks, such as the Lightning network~\cite{poon2015lightning} on Bitcoin~\cite{nakamoto2008bitcoin}, and the Raiden network \cite{raiden2017} on Ethereum \cite{wood2014ethereum}. In this work, we study different strategies for the nodes in such payment  networks, independent of which payment channel construction method is used. Thus, this work applies to all payment channel solutions.

Avarikioti~et~al.~\cite{avarikioti2018algorithmic,avarikioti2018payment} formulated a similar problem to the one studied in this paper.  Their goal was to find an optimal strategy for a central coordinator, a so-called payment service provider. In contrast to  \cite{avarikioti2018algorithmic,avarikioti2018payment}, our work studies a situation with \textit{multiple} players. In other words, our work is rooted in the area of game theory, whereas  \cite{avarikioti2018algorithmic,avarikioti2018payment} was using optimization methods. Despite these differences,  
we found that the near-optimal solution of \cite{avarikioti2018algorithmic,avarikioti2018payment} (the star as network structure) is also a Nash equilibrium in an uncoordinated situation. So we get a similar result despite two completely different approaches. This is a strong indication that the Lightning network (and similar others) will eventually develop into a more centralized network structure.

Network creation games, originally introduced in by Fabrikant et al.~\cite{fabrikant2003network}, are used to model distributed networks with rational players. Each player wants to maximize/minimize a profit/cost function which represents the cost of creating and using the network.
Fabrikant et al.~\cite{fabrikant2003network} modeled the Internet using Network Creation Games. They introduced a cost function containing the network creation cost and the sum of the distances to the other nodes. For their model, they proved upper and lower bounds for the Price of Anarchy(PoA). They also conjectured that the Nash equilibria in this game are trees, however this was disproved by Albers et al.~\cite{albers2014nash}. Alon et al.~\cite{alon2010basic} aimed for stronger bounds on PoA of the Network Creation Game (sum and local-diameter version). 
Both these works, however, use simple cost functions, where the creation cost and the usage cost of an node are independent. 
In contrast, in this work the cost function of a node on the payment network contains both the revenue from the fees of the channel when the node is an intermediate node in a multihop transaction as well as the fees paid by the node when he is sending a multihop transaction. Hence, the cost function depends on the state of the network which itself contains the individual fee policies of the nodes. Overall, the channel creation game is probably more complex than previous work in this domain.

\section{Conclusion}
We introduced the first game-theoretic model that encapsulates the payment channel creation game on blockchain networks. 
First, we explored various network structures and determined the parameter space  for which they constitute a Nash equilibrium. 
For the analysis, we initially assumed a fixed fee policy where each node benefits the same for each transaction routed through any of his channels. Then, we briefly considered a free fee policy, where the fee of each channel is part of the strategy of the node.

Particularly, for the fixed fee policy, we observed that the path is a Nash equilibrium only when the fee is zero. Otherwise, for a small positive fee we noticed the formation of a star. Furthermore, we showed that beyond an upper bound the star ceased to be a Nash equilibrium and multiple star structures emerged. This observation lead to the investigation of the complete bipartite graph which defined a class of Nash equilibria dependent on the correlation between the sizes of the two independent sets of nodes. Finally, the complete graph was proven to be a Nash equilibrium when the fee is relatively high, as expected. 

More importantly, we showed that even in a free fee policy, the star with uniform fees almost equal to the upper bound discussed above is a pure Nash equilibrium. 
On the contrary, the complete bipartite graph was proven unstable under this fee policy; we proved that a complete bipartite graph can never be a Nash equilibrium.
We note, that these observations indicate the stability of a star structure, even though its centralized nature is opposed to the philosophy of decentralized and distributed payment networks.

%
%
%
 \bibliographystyle{splncs04}
 \bibliography{ref}

\end{document}